\DeclareSymbolFont{rsfscript}{OMS}{rsfs}{m}{n}
\DeclareSymbolFontAlphabet{\mathrsfs}{rsfscript}
\newcommand{\sw}{synchronizing word}
\newcommand{\sws}{synchronizing words}
\newcommand{\sa}{synchronizing automata}
\newcommand{\san}{synchronizing automaton}
\newtheorem{conj}{Conjecture}
\begin{document}

\title{On Carpi and Alessandro conjecture}

\titlerunning{On Carpi and Alessandro conjecture}

\author{Mikhail V. Berlinkov}

\authorrunning{M. V. Berlinkov}

\tocauthor{M. V.Berlinkov (Ekaterinburg, Russia)}

\institute{Department of Algebra and Discrete Mathematics, Ural State University,\\
620083 Ekaterinburg, Russia\\
\email{berlm@mail.ru}}

\maketitle

\begin{abstract}
The well known open \v{C}ern\'y conjecture states that each \san\
with $n$ states has a \sw\ of length at most $(n-1)^2$. On the other
hand, the best known upper bound is cubic of $n$. Recently, in the
paper \cite{CARPI1} of Alessandro and Carpi, the authors introduced
the new notion of strongly transitivity for automata and conjectured
that this property with a help of \emph{Extension} method allows to
get a quadratic upper bound for the length of the shortest \sws.
They also confirmed this conjecture for circular automata. We
disprove this conjecture and the long-standing \emph{Extension}
conjecture too. We also consider the widely used Extension method
and its perspectives.
\end{abstract}

%
%
%

\section{Strongly transitivity and the \v{C}ern\'y conjecture}

Let $\mathrsfs{A}=\langle Q,\Sigma,\delta\rangle$ be a complete
\emph{deterministic finite automaton} (DFA), where $Q$ is the state
set, $\Sigma$ is the input alphabet, and $\delta:Q\times\Sigma\to Q$
is the transition function. The function $\delta$ extends uniquely
to a function $Q\times\Sigma^*\to Q$, where $\Sigma^*$ stands for
the free monoid over $\Sigma$; the latter function is still denoted
by $\delta$ and $\lambda$ denotes an empty word. Thus, each word in
$\Sigma^*$ acts on the set $Q$ via $\delta$. The DFA $\mathrsfs{A}$
is called \emph{synchronizing} if there exists a word $w\in\Sigma^*$
whose action resets $\mathrsfs{A}$, that is, leaves the automaton in
one particular state no matter which state in $Q$ it starts at:
$\delta(q,w)=\delta(q',w)$ for all $q,q'\in Q$. Any such word $w$ is
called a \emph{\sw} for $\mathrsfs{A}$. The minimum length of \sws\
for $\mathrsfs{A}$ is denoted by $\min_{synch}(\mathrsfs{A})$.

Synchronizing automata serve as transparent and natural models of
error-resistant systems in many applications (coding theory,
robotics, testing of reactive systems) and also reveal interesting
connections with symbolic dynamics and other parts of mathematics.
For a brief introduction to the theory of \sa\ we refer the reader
to the recent survey~\cite{Vo08}. Here we discuss one of the main
problem in this theory: the \v{C}ern\'y conjecture and related
problems.

In the paper \cite{Ce64} at 1964 \v{C}ern\'y conjectured that each
\san\ with $n$ states has a \sw\ of length at most
\makebox{$(n-1)^2$}. He also presented the extremal series of the
$n$-state circular automata with a shortest \sw\ of length
$(n-1)^2$. Thus he proved the lower bound of the conjecture. The
conjecture is still open and the best known upper bound for the
length of the shortest \sw\ is $\frac{n^3-n}{6}$. Pin proved this
result at 1983 in \cite{Pi83} using combinatorial result of Frankl
\cite{Fr82}. Since the lower bound is quadratic and the upper bound
is cubic, it is of certain importance to prove a quadratic upper
bound.
All existing methods for proving the upper bound of minimal length
of synchronizing words can be divided to <<compress>> and
<<extension>> methods. Methods of both types construct a finite
ordered collection of words $V=(v_1,v_2, \dots ,v_m)$, which
concatenation is synchronizing. Let us say that $m=|V|$ is the
\emph{size} of the collection $V$ and $L_V=\max_{i}{|v_i|}$ is the
\emph{length} of the collection $V$. The difference between these
types of methods is that the compress collection subsequently
compresses the set of states $Q$ to some state $p$, i.e
$$|Q| > |Q.v_1| > |Q.v_1v_2| > \ldots > |Q.v_1v_2 \dots v_m|=|\{p\}|,$$ while
the extension collection subsequently extends some state $p$ to the
set of states $Q$, i.e
$$|\{p\}| < |p.v_1^{-1}| < |p.v_1^{-1}v_2^{-1}| < \ldots < |p.v_1^{-1}v_2^{-1} \dots v_m^{-1}|=|Q|.$$
Since the size $m$ of the collections can not be more than $n-1$,
the proof of a quadratic upper bound can be reduced to the proof a
linear upper bound for the length of the collection $L_V$.

The compress method is used to prove the cubic upper bound
$\frac{n^3-n}{6}$ in the general case mentioned above. It is also
used to prove the \v{C}ern\'y conjecture for few <<small>> classes
of automata such as automata with zero, aperiodic automata
\cite{Traht2007} or interval automata. Since the \v{C}ern\'y
conjecture is proved for automata with zero, we assume automata is
strongly connected in the rest of the paper, otherwise the
considered problem can be reduced by using the construction of
automaton with zero (see \cite{Vo_CIAA07} for example).

The extension methods seem more productive to prove a quadratic
upper bounds. In 1998 Dubuc \cite{Du98} proved the \v{C}ern\'y
conjecture for circular automata, i.e. the automata with a letter,
which acts as a cyclic substitution. He used an extension method
combined with the skilful linear algebra techniques to prove this
result. In 2003 Kari \cite{Ka03} proved the \v{C}ern\'y conjecture
for Eurlian automata using extension method. The quadratic upper
bound was also confirmed for the one-cluster automata in the paper
\cite{BEAL1}. Let us note that it is the largest class of \sa\ with
proved quadratic upper bound.

In 2008 Arturo Carpi and Flavio D'Alessandro introduced the new
ideas for constructing the extension collection $V$ of linear
length. The ideas are based on the notion of the \emph{independent}
collection (or set) of words. The collection of words $W=(w_1,w_2,
\dots ,w_n)$ of the $n$-state automaton $\mathrsfs{A}$ is called
\emph{independent}, if for any two given state $s$ and $t$ there
exists an index $i$ such that $s.w_i=t$. The automaton
$\mathrsfs{A}$ is called \emph{strongly transitive}, if it admits
some independent collection of words $W$. It is easy to check, that
each synchronizing strongly connected automaton $\mathrsfs{A}$ is
strongly transitive. Moreover, if $u$ is synchronizing, then
$\mathrsfs{A}$ has an independent collection of length not more than
$|u|+n-1$.
The authors also proved that this bound is tight and if the
$n$-state automaton $\mathrsfs{A}$ is strongly transitive with some
independent collection $W$, then it has a synchronizing word of
length not more than \makebox{$(n-2)(n+L_W-1)+1$}. Later they
conjectured that each \sa\ has an independent collection of linear
length. Formally, for some number $k>0$ the following
\emph{$kn$-Independent-Set} conjecture holds true.
\begin{conj}
\label{carpi_conj} Each strongly connected $n$-state synchronizing
automaton has an independent collection $W=(w_1,w_2, \ldots w_n)$ of
length less than $kn$.
\end{conj}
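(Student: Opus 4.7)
The plan is to build the independent collection $(w_1,\ldots,w_n)$ via an extension-style construction in the spirit of Dubuc's proof for circular automata. Fix a state $p\in Q$; by strong connectivity, for every $t\in Q$ there is a short \emph{dispersion word} $d_t$ with $p.d_t=t$ and $|d_t|\le n-1$. The most direct route is to choose a universal prefix $u$ of length $O(n)$ that drives the whole automaton into a small set containing $p$, and then set $w_t=u\cdot d_t$; if $u$ is itself synchronizing then $s.w_t=p.d_t=t$ for every $s$, trivially giving independence with $L_W\le |u|+n-1$. Unfortunately this reduces the problem to proving a linear upper bound on the shortest \sw, which is far stronger than even the \v{C}ern\'y conjecture, so this approach cannot work by itself.

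The refined plan is to replace the single prefix by a small family of short \emph{block compressors} $u_1,\ldots,u_k$. Partition $Q$ into blocks $B_1,\ldots,B_k$ of roughly equal size and try to find $u_j$ of length $O(n)$ with $Q.u_j\subseteq B_j$, arranged so that the preimages of distinct elements of $B_j$ cover $Q$ in a balanced way. Each word of the independent collection then has the form $w_t=u_{j(t)}\cdot d_t$ for the block $B_{j(t)}$ containing $t$, keeping $|w_t|=O(n)$. Independence would follow by tuning the preimages of the $u_j$ so that every pair $(s,t)\in Q\times Q$ is realized by exactly one $w_t$; a Hall-type matching argument on the bipartite graph of ``state $s$ versus candidate prefix $u_j$'' is the natural way to guarantee the existence of a suitable labelling of the targets.

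The main obstacle is clearly the second step: producing block compressors of linear length in an arbitrary strongly transitive automaton. The extension and compression methods known today only yield quadratic or cubic bounds on such words, and nothing in strong transitivity alone obviously forces a linear bound. If this step resists direct attack, I would try to locate a structural invariant of $\mathrsfs{A}$ — for instance the minimum image size of a single letter, or the length of the longest single-letter cycle — whose smallness would force short block compressors, and then check whether this invariant is controlled by strong transitivity together with strong connectivity. If no such control is available, the same analysis would indicate where an explicit counterexample to Conjecture~\ref{carpi_conj} should be sought.
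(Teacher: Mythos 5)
The fundamental problem with your proposal is that Conjecture~\ref{carpi_conj} is \emph{false}, and the whole point of this paper is to disprove it; no proof strategy can close the gap you have correctly identified, because that gap is unbridgeable. Your own diagnosis is exactly right: everything hinges on producing ``block compressor'' prefixes of linear length in an arbitrary strongly connected \san, and nothing in the hypotheses forces such words to exist. The paper turns this obstruction into an explicit counterexample. It first proves (Lemma~\ref{bal_imply_carpi}) that an independent collection of length less than $kn$ yields, for every subset $S$, a collection $v_1,\dots,v_n$ with $|v_i|<kn$ and $\sum_i [S.v_i^{-1}] = n\tfrac{|S|}{|Q|}[Q]$ --- the $kn$-Balanced property --- simply by summing the identity $\sum_i [t.w_i^{-1}]=[Q]$ over $t\in S$. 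It then exhibits the two-letter automaton $\mathrsfs{A}(m,k)$, in which the letter $a$ cycles the states $q_0,\dots,q_m$ and collapses the tails $s_1,\dots,s_k$ onto the cycle, while $b$ acts only on the chain $q_0\to s_1\to\dots\to s_k\to q_0$. For the subset $S=\{q_0,s_1,\dots,s_k\}$ any word expanding the preimage inside the cycle must interleave full traversals $a^m$ with single applications of $b$, forcing length at least $(a^m b)^k a^m$, i.e.\ $(k+1)n-k^2 > kn$ once $n>k^2$. This contradicts the balanced property, hence no independent collection of length less than $kn$ exists for these automata.

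The constructive instinct behind your Hall-type matching and balanced-preimage ideas is not wasted --- it is essentially the mechanism by which an independent collection, \emph{when it exists with linear length}, yields a quadratic bound on the shortest \sw\ (this is the positive result of Carpi and Alessandro that motivated the conjecture). But as a proof of the conjecture itself your plan stalls exactly where you feared, and the correct next move is the one you name in your last sentence: use the failure of the block-compressor step to locate a counterexample. The automaton $\mathrsfs{A}(n-k,k)$ is precisely such a witness, built so that reaching the $b$-unstable states from the long $a$-cycle is expensive and must be paid for $k+1$ times over.
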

Since $k$ is a constant, this conjecture implies quadratic upper
bound of the minimal length of \sw\ for all synchronizing automata.
If the automaton is circular and $a$ denotes the circular letter,
then the independent collection $W$ can be chosen as $(\lambda, a,
a^2, \dots ,a^{n-1})$. Hence, the $1*n$-Independent-Set conjecture
 is true for circular automata. This implies the upper bound \makebox{$2(n-2)(n-1)+1$} for
this class of automata.

Our paper is organized as follows. At first, in the
section~\ref{sec_method} we consider the \emph{Extension Algorithm}
in the universal form, introduce the $kn$-Extension and
$kn$-Balanced conjecture and prove that the last one implies
$kn$-Independent-Set conjecture. After this, in the
section~\ref{sec_series} we construct a series, which disproves
introduced conjectures, in particular, the $kn$-Independent-Set
conjecture of Carpi and Alessandro for each $k>0$. Finally, in the
section~\ref{sec_concl} we generalize the disproved conjectures to
the <<local>> form and discuss the perspectives of the extension
method.

\section{Extension Method}
\label{sec_method}
%
Let us consider precisely the extension method, implicity used in
the papers \cite{BEAL1,CARPI1,Du98,Ka03,Rystsov1}. In the rest of
the paper we assume $\mathrsfs{A}=\langle Q,\Sigma,\delta \rangle$
is an $n$-state strongly connected \san.

Suppose $C_s,C_e$ are some subsets of $Q$ and $v_s,v_e$ are some
words such that $Q.v_e=C_e \supseteq C_s,|C_s.v_s|=1$. Then the
following algorithm returns a \sw\ by constructing an extension
collection of words.

\bigskip
\noindent \textbf{Expansion Algorithm (EA)}
\smallskip
\hrule
\smallskip
\noindent \textbf{input} $\mathrsfs{A},C_s,C_e,v_s,v_e$
\smallskip

\noindent\textbf{initialization} $v\leftarrow v_s$\\[.1ex]
{}\phantom{\textbf{initialization}} $S\leftarrow C_s$

\smallskip

\noindent\textbf{while} $|S \bigcap C_e|< |C_e|$, find a word
$u(S)=u\in\Sigma^*$ of minimum length with
{}\phantom{\textbf{while}} $|S.u^{-1} \bigcap C_e|>|S \bigcap C_e|$; if none exists, \textbf{return} Failure\\[.1ex]
{}\phantom{\textbf{while}} $v\leftarrow uv$\\[.1ex]
{}\phantom{\textbf{while}} $S\leftarrow S.u^{-1}$

\smallskip
\noindent\textbf{return} $v_{e}v$
\smallskip
\hrule
\bigskip

It is easy to show that the algorithm $EA$ works. At first, the
cycle iterates not more than $|C_e|-|C_f|$ times, because each
iteration expands the set $S \bigcap C_e$ to a one or more elements.
Let us show $EA$ does not fail, i.e. the word $u=u(S)$ exists in
each iteration. We now consider the one iteration of the cycle. Let
$|S\bigcap C_e|<|C_e|$. Since $\mathrsfs{A}$ is synchronizing, there
exists a \sw\ $u$, i.e. $Q.u = \{p\}$. Moreover, since
$\mathrsfs{A}$ is strongly connected, the word $u$ can be chosen to
satisfy $p \in S$. Thus the following calculations hold true.
$$Q \supseteq S.u^{-1} \supseteq p.u^{-1} = Q \Rightarrow S.u^{-1} = Q$$
$$|S.u^{-1} \bigcap C_e| = |Q \bigcap C_e| = |C_e| > |S \bigcap C_e|$$

Hence $u$ satisfies desired condition and after the last iteration
we have $C_e \subseteq S$ and $S=p.v^{-1}$ for some state $p \in Q$.
Since $C_e.{v_e}^{-1}=Q$, then
$$p.(v_ev)^{-1}=p.v^{-1}{v_e}^{-1}=S.{v_e}^{-1}=Q.$$
Hence the word $v_ev$ is synchronizing. Furthermore, since the cycle
of $EA$ iterates not more than $|C_e|-|C_f|$ times, the length of
$v_ev$ does not exceed
$$|v_e|+|v_s|+(|C_e|-|C_f|)\max_{S \subset C_e}{|u(S)|}$$

Thus, in order to prove the upper bound, we need to estimate the
maximal possible length of the extension words $u(S)$ (the length of
the extension collection). Let us introduce one of the basic
definition in the paper.
\begin{definition}
The subset $S \subseteq Q$ of automaton $\mathrsfs{A}$ is called
\emph{$m$-Extendable} in the subset $C_e \subseteq Q$, if there
exists some (extension) word $v$ of length not more than $m$ such
that $$|(S \cap C_e).v^{-1}|> |S \cap C_e|.$$
\end{definition}

We now formulate the \emph{Extension} conjecture.
\begin{conj}
\label{ext_conj} Each proper subset $S$ of $Q$ is $n$-Extendable.
\end{conj}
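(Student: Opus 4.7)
The plan is to attempt a direct constructive proof. Let $S\subsetneq Q$ be a proper subset, set $T=Q\setminus S$, and observe that $|S.v^{-1}|>|S|$ is equivalent to $|T.v^{-1}|<|T|$, since $v^{-1}(S)$ and $v^{-1}(T)$ partition $Q$. So it is enough to exhibit a word $v$ of length at most $n$ that maps strictly fewer than $|T|$ states into $T$.

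I would start with the trivial case: if some single letter $a\in\Sigma$ satisfies $|a^{-1}(S)|>|S|$, we are done with $|v|=1\leq n$. Otherwise every letter $a$ satisfies $|a^{-1}(S)|\leq|S|$; equivalently, every non-permutational letter has all of its merging collisions contained in $T$. Since $\mathrsfs{A}$ is synchronizing, at least one such non-permutational letter $a$ must exist; let $a(q_1)=a(q_2)=t\in T$ be a witnessing collision. The natural idea is now to combine this collision with strong connectivity: there is a word $w$ of length at most $n-1$ carrying $t$ to some state $s\in S$. The candidate extension word is then $aw$ (or a short variant), and one would try to show it increases $|v^{-1}(S)|$ because the collision originally in $T$ has been routed into $S$.

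The delicate step is to verify the net count rigorously. Composing $a$ with $w$ does send the pair $(q_1,q_2)$ into $S$, but it may simultaneously pull other states of $v^{-1}(S)$ out of $S$, so one has to track the balance. The natural way to make this argument robust is a potential/pigeonhole argument on the chain $S,\,S.a_1^{-1},\,S.a_1^{-1}a_2^{-1},\ldots$ in the power automaton: within $n$ steps, either the cardinality strictly increases (giving the desired extension) or the chain must revisit a subset-size, which one would like to parlay into a shorter extension by splicing.

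The main obstacle is precisely the last bound. The power automaton has $2^n$ subsets, and a trajectory can plausibly remain at a fixed cardinality for many steps by wandering through different subsets of the same size, so there is no a~priori reason the chain must strictly expand within $n$ steps. A pigeonhole on $n$ cardinalities alone is too weak; one would need an extra structural invariant (monotonicity, a potential function, or strong-connectivity applied to a suitable quotient) to force expansion at depth $n$. I expect this is exactly where the conjecture breaks: since the paper announces a disproof, the counterexample presumably traps some proper $S$ in a long orbit of equal-cardinality subsets in the power automaton, so that every expansion word is forced to have length strictly greater than $n$.
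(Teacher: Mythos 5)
There is no way to close your argument, because the statement you are trying to prove is false, and the paper's treatment of it is a \emph{disproof}, not a proof: Conjecture~\ref{ext_conj} is refuted in Proposition~\ref{main_prop} by the series $\mathrsfs{B}_n=\mathrsfs{A}(n-2,1)$ for $n>3$. There, for the subset $S=C_b$ (the states moved by $b$), the shortest word $v$ with $|S.v^{-1}|>|S|$ is forced, via Remark~\ref{rem_b}, to be $a^{n-2}ba^{n-2}$, of length $2n-3>n$. So the gap you flag at the end --- that a pigeonhole on the chain $S, S.a_1^{-1},\dots$ gives no reason for expansion within $n$ steps --- is not a repairable technical deficiency but exactly where any proof must break; your closing guess about the mechanism is essentially right, except that in the actual counterexample the preimage chain does not merely wander among equal-size subsets: it first \emph{shrinks} (to $\{q_{n-2}\}$ under $a^{-1}$) and only recovers and exceeds $|S|$ after the full word $a^{m}ba^{m}$, which is why the bound degrades to roughly $2n$ rather than $n$. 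Indeed the same family shows the $cn$-Extension conjecture fails for every $c<2$, so no splicing or potential-function refinement of your chain argument can rescue a linear bound with constant below $2$.

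Two smaller points in your sketch also deserve care. First, the claim that $|a^{-1}(S)|\leq|S|$ for every letter is ``equivalent'' to every non-permutational letter having all its merging collisions inside $T$ is not correct: a collision landing in $S$ can be offset by states of $S$ with empty preimage, so the hypothesis does not hand you a collision state $t\in T$. Second, even granting such a $t$ and a word $w$ of length at most $n-1$ with $t.w\in S$, the candidate $aw$ already has length up to $n$, leaves no room to absorb the losses you correctly worry about, and in $\mathrsfs{B}_n$ this ``route the collision into $S$'' strategy provably cannot produce an extension word of length $\leq n$ for $S=C_b$. If you want a positive statement in this spirit, the paper's concluding section suggests the surviving candidates: the $2n$-Extension conjecture and the local variants (Conjectures~\ref{local_ext_k_conj} and \ref{local_bal_k_conj}), where the extension is required only inside a suitable subset $C_e\subsetneq Q$.
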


Suppose the automaton $\mathrsfs{A}$ satisfies the Extension
conjecture. Let us set $C_e=Q$ and $v_e=\lambda$. Since
$\mathrsfs{A}$ is synchronizing, there exists a letter $v_s$ and a
subset $C_s$ such that \makebox{$|C_s.v_s|=1<|C_s|$}. Applying $EA$
for this input data, we get a \sw\ $v_ev$ as a result. Finally, we
have that $\mathrsfs{A}$ satisfies the \v{C}ern\'y conjecture
$$|v_ev| \leq |v_e|+|v_s|+(|C_e|-|C_s|)\max_{S\subset C_e}{|u(S)|} \leq 0+1+(n-2)n = (n-1)^2$$

Thus the \v{C}ern\'y conjecture follows from the Extension
conjecture. This fact is used in the papers \cite{Du98} and
\cite{Ka03} to prove the \v{C}ern\'y conjecture for circular and
Eurlian automata respectively. The counterexample for Extension
conjecture is presented in the paper of Kari \cite{KariExample}.
However, the example is the $6$-state automaton, so the conjecture
is still open for $n>6$. In the next section we present a series of
counterexamples for $n>3$. We now generalize this conjecture to the
$kn$-Extension conjecture.

\begin{conj}
\label{ext_k_conj} Each proper subset $S$ of $Q$ is $kn$-Extendable.
\end{conj}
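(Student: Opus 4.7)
The plan is to prove the $kn$-Extension conjecture by combining the independent-collection machinery with an averaging argument over short preimage words. Fix a proper subset $S \subsetneq Q$; the goal is to exhibit a word $v$ of length at most $kn$ with $|S.v^{-1}| > |S|$, equivalently a short path from $S$ to a strictly larger subset along letter-preimage edges in the powerset automaton on $2^Q$.

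The first ingredient is strong transitivity. Because $\mathrsfs{A}$ is strongly connected and synchronizing, it admits an independent collection $W = (w_1, \dots, w_n)$ such that for every pair $(q,t) \in Q \times Q$ some $w_{i(q,t)}$ satisfies $q.w_{i(q,t)} = t$, with $L_W \leq \min_{synch}(\mathrsfs{A}) + n - 1$. Feeding in the Pin--Frankl cubic bound and then iteratively tightening via the Expansion Algorithm should produce, for each outsider $q \in Q \setminus S$, a candidate preimage word $w_q$ of length $O(n)$ with $q.w_q \in S$, so that $q \in S.w_q^{-1}$.

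The second step is a pigeonhole argument over the family $\mathcal{F} = \{w_q : q \in Q \setminus S\}$. Decompose
\[ \sum_{w \in \mathcal{F}} |S.w^{-1}| \;=\; |\mathcal{F}| \cdot |S| + G - L, \]
where $G$ counts outsider-to-insider crossings and $L$ counts insider-to-outsider crossings. By construction $G \geq |Q \setminus S|$. If $L < G$ then some $w \in \mathcal{F}$ is $kn$-extending, and the conjecture follows. The remaining task is therefore to bound $L$ from above by $G - 1$; for this I would seek a potential function $\Phi : 2^Q \to \mathbb{Z}_{\geq 0}$ bounded by $n$ that strictly decreases at every failed extension step --- natural candidates are the dimension of the subspace of $\mathbb{Q}^Q$ spanned by iterated preimage characteristic vectors of the current subset, or a rank invariant drawn from the transition monoid.

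The decisive difficulty will be producing such a $\Phi$ in the general case. In Kari's Eulerian and Dubuc's circular settings an analogous invariant falls out of rigid algebraic structure --- the uniform stationary distribution and the cyclic action of the distinguished letter, respectively. Absent such structure I would hunt for $\Phi$ inside the $\mathcal{J}$-class structure of the transition monoid, or via rank estimates on the $0/1$ matrices representing each word's action on $\mathbb{Q}^Q$; if a $\Phi$ with the required monotonicity can be constructed, the $kn$-Extension conjecture will follow.
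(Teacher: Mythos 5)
You are attempting to prove a statement that this paper is written to refute. Conjecture~\ref{ext_k_conj} is not established anywhere in the paper; on the contrary, Proposition~\ref{main_prop} shows that for every $c<2$ the series $\mathrsfs{B}_n=\mathrsfs{A}(n-2,1)$ violates $cn$-Extendability once $n>\frac{3}{2-c}$. Concretely, for the subset $S=C_b=\{q_0,s_1\}$ of $\mathrsfs{B}_n$, Remark~\ref{rem_b} forces the shortest word $v$ with $|S.v^{-1}|>|S|$ to be $a^{m}ba^{m}$, of length $2m+1=2n-3$, which exceeds $cn$ for all large $n$ whenever $c<2$. So no general proof of the kind you sketch can exist for $k=1$ (the classical Extension conjecture, Conjecture~\ref{ext_conj}, already refuted by Kari's $6$-state example) nor for any coefficient below $2$; only the $2n$-Extension version remains open, and the paper explicitly lists it among the questions still worth attacking.

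Independently of this, both pillars of your argument are themselves conjectures that the paper disproves. Your first ingredient --- a preimage word $w_q$ of length $O(n)$ for each outsider $q$, harvested from an independent collection of linear length --- is precisely the $kn$-Independent-Set conjecture of Carpi and D'Alessandro (Conjecture~\ref{carpi_conj}); by Lemma~\ref{bal_imply_carpi} it would imply the $kn$-Balanced property, and Proposition~\ref{main_prop} refutes the latter for the series $\mathrsfs{C}_n=\mathrsfs{A}(n-k,k)$ with $n>k^2$: any word $v_j$ contributing the required balanced sum over $S=C_b$ must carry $S$ through the pattern $(a^{m}b)^{k}a^{m}$ and hence has length at least $(k+1)n-k^2>kn$. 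Your averaging identity over the family $\mathcal{F}$ is exactly this Balanced property in disguise, so the pigeonhole step collapses on the same counterexamples. Finally, even granting the ingredients, the proposal is incomplete on its own terms: the potential function $\Phi$ needed to force $L\le G-1$ is never constructed, and the series above shows that no such invariant can exist in general when the extension is carried out in $C_e=Q$ --- which is why the paper redirects the whole programme toward the local variants (Conjectures~\ref{local_ext_k_conj} and~\ref{local_bal_k_conj}) with a proper target set $C_e\subsetneq Q$.
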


If the $kn$-Extension conjecture holds true for $\mathrsfs{A}$, then
$EA$ returns a \sw\ of length at most $(n-2)kn+1$. Since $k$ is a
constant, this bound is also quadratic. This conjecture is often
proved by using the following $kn$-\emph{Balanced} conjecture.

\begin{conj}
\label{balanced_k_conj} Each proper subset $S$ of $Q$ admits a word
collection $v_1,v_2 \ldots v_m$ such that $|v_i| < kn$ with the
following property.
$$\sum_{i=1}^{m}{[S.v_i^{-1}]}=m\frac{|S|}{|Q|}[Q],$$
where $[T]$ denotes the characteristic vector of the set $T$ in the
linear space $R^n$.
\end{conj}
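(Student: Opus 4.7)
The plan is to reformulate the balanced condition combinatorially and then construct the required multiset in two stages. The identity $\sum_i [S.v_i^{-1}] = m(|S|/n)[Q]$ is equivalent to saying that for each state $q \in Q$, the multiset $\{i : q.v_i \in S\}$ has size exactly $m|S|/n$; so the task is to produce a short-word collection that sends each state into $S$ with uniform multiplicity, the smallest feasible size being $m = n/\gcd(n,|S|) \leq n$.

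As a warm-up I would verify the conjecture assuming a short synchronizing word is already at hand. If $u$ is synchronizing with $Q.u = \{p\}$ and $|u| \leq (k-1)n$, then by strong connectivity, for each $q \in Q$ one can find a word $z_q$ of length at most $n-1$ with $p.z_q = q$; set $v_q = u z_q$. Each $v_q$ is synchronizing with $Q.v_q = \{q\}$, so $[S.v_q^{-1}]$ equals $[Q]$ when $q \in S$ and the zero vector otherwise. Summing over $q$ gives $|S|[Q] = n\cdot(|S|/n)[Q]$, the required identity with $m=n$ and every $|v_q| < kn$. However, this reduces the Balanced conjecture to a linear bound on the shortest \sw, which is \v{C}ern\'y's conjecture itself, so the argument is circular and must be replaced by one that uses only strong connectivity.

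To avoid the circularity my main attempt would appeal only to strong connectivity. For each pair $(q,s)\in Q\times S$ fix a word $w_{q,s}$ of length at most $n-1$ with $q.w_{q,s}=s$; this produces a pool of short candidate words, each sending some prescribed state into $S$. The conjecture then becomes a regular-cover problem on the bipartite graph whose left vertices are $Q$, whose right vertices are words of length below $kn$, and whose edges record the incidences $q.v\in S$: we want a left-regular sub-multiset of right vertices of degree $m|S|/n$. I would try to produce it by a fractional-relaxation plus Hall-type rounding argument, controlling the residual imbalance by appending further short correcting words.

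The main obstacle is that such rounding appears to demand a quantitative mixing estimate for the reverse action of $\Sigma^*$ on $2^Q$ — essentially, that short words distribute preimages roughly uniformly across $Q$ — and no such estimate is available for general strongly connected synchronizing automata. This is where I expect the plan to genuinely stall, and any way of making this step precise seems to me to be the real content of the conjecture; without it, there is no reason the pool of short candidates should be rich enough to admit a regular sub-multiset for every proper subset $S$.
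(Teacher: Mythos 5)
You are trying to prove a statement that the paper states only as a conjecture and then \emph{refutes}: the entire point of Section~\ref{sec_series} is that the $kn$-Balanced conjecture is false, so the place where your plan ``genuinely stalls'' is not a missing technical lemma but an unavoidable obstruction. Concretely, Proposition~\ref{main_prop}(3) takes the automaton $\mathrsfs{C}_n=\mathrsfs{A}(n-k,k)$ of Figure~\ref{A2} with $n>k^2$ and the subset $S=C_b=\{q_0,s_1,\dots,s_k\}$, so $|S|=k+1$. If a collection $v_1,\dots,v_m$ with $|v_i|<kn$ satisfied $\sum_{i}[S.v_i^{-1}]=m\frac{|S|}{|Q|}[Q]$, then restricting this identity to the coordinates $q_0,q_1,\dots,q_{n-k}$ and averaging shows that some $v_j$ must satisfy $|S.v_j^{-1}\cap\{q_0,\dots,q_m\}|\geq k+1$. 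But by Remark~\ref{rem_b} the letter $b$ can act nontrivially on a preimage only when that preimage meets $C_b$ without containing it, and every useful application of $b$ must be paid for by a full pass $a^m$ around the cycle; tracing this through forces $v_j$ to contain $(a^mb)^ka^m$, hence $|v_j|\geq k(m+1)+m=(k+1)n-k^2>kn$ for $n>k^2$, a contradiction. So the ``quantitative mixing estimate'' you correctly identify as the crux does fail for general strongly connected synchronizing automata, and no rounding or Hall-type argument can rescue the statement; via Lemma~\ref{bal_imply_carpi} this failure is exactly what the paper uses to refute the Carpi--D'Alessandro $kn$-Independent-Set conjecture as well.

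Two smaller points about your partial arguments. Your warm-up needs a synchronizing word $u$ with $|u|\leq(k-1)n$ (in fact $|uz_q|<kn$), i.e.\ a \emph{linear} bound on $\min_{synch}(\mathrsfs{A})$ for fixed $k$; this is not merely ``\v{C}ern\'y's conjecture itself'' but strictly stronger and already false for the \v{C}ern\'y circular series, whose shortest synchronizing words have length $(n-1)^2$, so the conditional statement has essentially empty scope for fixed $k$ and large $n$. Your reformulation of the balance identity (each $q$ lies in $S.v_i^{-1}$ for exactly $m|S|/n$ indices $i$) is correct and is the right way to read the condition, but note that the minimal feasible $m$ is $n/\gcd(n,|S|)$ only as a divisibility constraint; nothing guarantees a collection of that size exists, and in the counterexample above no collection of any size works once the length bound $|v_i|<kn$ is imposed.
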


One can prove that the $(k-1)n$-Balanced conjecture implies the
$kn$-Extension conjecture (for synchronizing automaton). Proofs of
this or equivalent facts can be found in the papers
\cite{Rystsov1,BEAL1,CARPI1}. Thus $kn$-Balanced conjecture also
implies the quadratic upper bound. The following lemma shows that
$kn$-Balanced conjecture implies $kn$-Independent-Set conjecture.

\begin{lemma}
\label{bal_imply_carpi} If synchronizing $n$-state automaton
$\mathrsfs{A}$ satisfies $kn$-Independent-Set conjecture then
$\mathrsfs{A}$ satisfies $kn$-Balanced conjecture.
\end{lemma}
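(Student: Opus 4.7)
The plan is to use the independent collection itself, without modification, as the balanced collection. Given a proper subset $S\subseteq Q$, I would first invoke the $kn$-Independent-Set hypothesis to produce a collection $W=(w_1,\ldots,w_n)$ with $\max_i|w_i|<kn$ such that for every pair of states $s,t\in Q$ there is some index $i$ with $s.w_i=t$. I then claim that this same collection already witnesses the $kn$-Balanced conjecture for $S$, with $m=n$ and $v_i=w_i$.

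The key observation to exploit is that for each fixed state $q\in Q$ the map $i\mapsto q.w_i$ is in fact a bijection from $\{1,\ldots,n\}$ onto $Q$. Surjectivity is exactly the content of independence applied with first argument $q$, and because the domain and codomain both have cardinality $n$, surjectivity forces bijectivity. Consequently the tuple $(q.w_1,\ldots,q.w_n)$ is a permutation of $Q$, with each state appearing exactly once.

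Fixing any $S\subseteq Q$ and any $q\in Q$, I then simply count: exactly $|S|$ of the values $q.w_1,\ldots,q.w_n$ lie in $S$, so $\sum_{i=1}^{n}[S.w_i^{-1}](q)=|S|$. Since this holds for every $q$, the vector identity
$$\sum_{i=1}^{n}[S.w_i^{-1}]\;=\;|S|\cdot[Q]\;=\;n\cdot\frac{|S|}{|Q|}[Q]$$
follows immediately, which is exactly the Balanced identity with $m=n$; the length bound $|v_i|=|w_i|<kn$ is inherited from $W$.

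I do not anticipate any real obstacle here: the whole argument reduces to the pigeonhole remark that an $n$-element family of self-maps of an $n$-element set which is jointly surjective at every starting point is automatically pointwise bijective. Once that is isolated, the Balanced conclusion is a one-line count. The only minor subtlety worth flagging is that the Balanced conjecture formally allows the collection to depend on $S$, whereas the collection $W$ furnished by Independent-Set is universal --- but this only strengthens the conclusion, since a single $W$ then serves every proper $S\subseteq Q$ simultaneously.
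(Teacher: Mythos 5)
Your proof is correct and is essentially the paper's own argument: both take the independent collection $W$ itself as the balanced collection with $m=n$ and reduce the identity to a counting statement. You merely organize the count pointwise at each source state $q$ (via the bijectivity of $i\mapsto q.w_i$), which is the transpose of the paper's computation fixing a target state $t$ and summing preimage vectors, and it has the small merit of making explicit the pigeonhole step (surjectivity plus $|W|=|Q|=n$ forces bijectivity) that the paper passes over when converting $\bigcup_i t.w_i^{-1}=Q$ into $\sum_i [t.w_i^{-1}]=[Q]$.
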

\begin{proof}
Suppose $W=\{w_1,w_2, \ldots w_n\}$ is an independent set in the
automaton $\mathrsfs{A}$ of length less than $kn$, i.e. $|w_i| < kn$
and for any two given state $s$ and $t$ there exists an index $i$
such that $s.w_i=t$. Let us fix the arbitrary state $t$. Then for
each $s \in Q$ there exists an index $i$ such that $s.w_i=t$ or
equivalently \makebox{$\bigcup_{i=1}^{n}{t.w_i^{-1}}=Q$}. In the
linear form it can be written as
\makebox{$\sum_{i=1}^{n}{[t.w_i^{-1}]}=[Q]$}. Since the automaton is
deterministic then for each subset $S$ of $Q$, we have the desired
property of $kn$-Balanced conjecture
\begin{align*}
&\sum_{i=1}^{n}{[S.w_i^{-1}]}=\sum_{i=1}^{n}{\sum_{q \in
S}{[q.{w_i}^{-1}]}}=\sum_{q \in
S}{\sum_{i=1}^{n}{[q.{w_i}^{-1}]}}=&&\\
&=\sum_{q \in S}{[Q]}=|S|[Q]=n\frac{|S|}{|Q|}[Q].
\end{align*}
\end{proof}

\section{Slow extended series}
\label{sec_series}
%

The $2$-letter automaton $\mathrsfs{A}(m,k)=\langle Q,\Sigma,\delta
\rangle$ is drawn at the Figure~\ref{A2}. If for some state $q\in Q$
and some letter $d\in\Sigma$ there is no output edge from the state
$q$ labeled by $d$, then we assume the loop is drawn there. All such
edges are omitted for the sake of simplicity.

\begin{figure}[ht]
\begin{center}
\unitlength=1mm
\begin{picture}(168,85)(50,-85)

\node(n0)(48.03,-60.15){$q_1$}

\node(n4)(32.03,-60.15){$q_0$}

\node(n5)(64.0,-60.15){$q_2$}

\node(n10)(48.03,-36.18){$s_1$}

\node(n11)(64.03,-36.15){$s_2$}

\node(n12)(120.0,-36.0){$s_k$}

\drawedge(n10,n5){$a$}

\drawedge(n11,n5){$a$}

\drawedge(n12,n5){$a$}

\drawedge(n4,n10){$b$}

\drawedge(n10,n11){$b$}

\drawedge(n4,n0){$a$}

\drawedge(n0,n5){$a$}

\node(n14)(83.94,-60.15){$q_3$}

\node(n15)(103.88,-60.15){$q_4$}

\node(n17)(159.82,-60.15){$q_{m}$}

\drawedge(n5,n14){$a$}

\drawedge(n14,n15){$a$}

\node(n19)(135.91,-60.15){$q_{m-1}$}

\drawedge(n19,n17){$a$}

\drawbpedge(n12,-203,71.19,n4,-217,39.59){$b$}

\node[linecolor=White,Nframe=n,Nfill=y,fillcolor=White](n112)(100.0,-36.0){$\ldots$}

\node[linecolor=White,Nframe=n,Nfill=y,fillcolor=White](n111)(80.0,-36.0){$\ldots$}

\drawedge[curvedepth=14.42](n17,n4){$a$}

\drawedge(n11,n111){$b$}

\drawedge(n112,n12){$b$}
\end{picture}
\end{center}
\caption{Automaton $\mathrsfs{A}(m,k)$} \label{A2}
\end{figure}

Let us denote by $C_b$ the set of all states unstable by $b$, i.e.
\makebox{$C_b=\{q_0,s_1,s_2,\ldots ,s_k\}$}. The following remark is
directly follows from the construction of the automaton and shows
when the letter $b$ can appear in the shortest expanding word.

\begin{remark}
\label{rem_b} Suppose $S$ is a subset of $Q$ unstable by $b$, i.e.
\makebox{$S.b^{-1} \neq S$}; then \makebox{$C_b \cap S \neq
\emptyset$} and $C_b \nsubseteq S$.
\end{remark}

 We now formulate the main proposition about properties of
the collection of automata $\mathrsfs{A}(m,k)$.

\begin{proposition}
\label{main_prop}
\begin{enumerate}
    \item The series $\mathrsfs{B}_n=\mathrsfs{A}(n-2,1)$ is a counterexample of the Extension
conjecture for $n>3$;
    \item For each $c<2$ the series $\mathrsfs{B}_n$ is also a counterexample of the
    \makebox{$cn$-Extension} conjecture for $n>\frac{3}{2-c}$;
    \item For each $k \in \mathbb{N}$ the series $\mathrsfs{C}_n=\mathrsfs{A}(n-k,k)$ for $n>k^2$ is a counterexample of
the $kn$-Balanced conjecture and $kn$-Independent-Set conjecture of
Carpi and Alessandro, therefore.
\end{enumerate}
\end{proposition}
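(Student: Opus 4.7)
The plan is to take, for parts~(1) and~(2), the witness $S=C_b=\{q_0,s_1\}$ in $\mathrsfs{B}_n$ and show that every word $v$ with $|S.v^{-1}|>|S|$ has length at least $2n-3$. Two structural observations drive everything: $S.b^{-1}=S$, because $b$ only swaps $q_0$ and $s_1$ and fixes the other states, so a trailing $b$ in $v$ does nothing; and $S.a^{-1}=\{q_{n-2}\}$, because only $q_{n-2}$ is an $a$-preimage of $q_0$ and $s_1$ has no $a$-preimage at all. Together they force every expansion word to first shrink $S$ to a singleton and then rebuild the set up to size three.

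The core technical step is that $a^{-1}$ enlarges a set $T$ only when $q_2\in T$ and $s_1\notin T$ (the unique $a$-merge being $q_1.a=s_1.a=q_2$). Combining this with Remark~\ref{rem_b}, which constrains where a $b$-letter can appear in a shortest expansion word, I would trace the preimage graph of $S$ and verify that the shortest path to a set of size three is
\[
\{q_0,s_1\}\xrightarrow{a^{-(n-2)}}\{q_1,s_1\}\xrightarrow{b^{-1}}\{q_0,q_1\}\xrightarrow{a^{-(n-3)}}\{q_2,q_3\}\xrightarrow{a^{-1}}\{q_1,s_1,q_2\},
\]
of total length $(n-2)+1+(n-3)+1=2n-3$. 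The main obstacle is the routine but case-heavy verification that no shorter path exists: at every size-$2$ intermediate set $T'$ off this trajectory, either $T'.a^{-1}$ shrinks (because $s_1\in T'$ and $q_2\notin T'$), or $T'.a^{-1}$ stays of size two on a cyclic shift of the $q$-part that still avoids $q_2$, or $T'.b^{-1}$ is either a self-loop or swaps the roles of $q_0$ and $s_1$ without helping the size. Part~(1) then follows from $2n-3>n$ for $n>3$, and part~(2) from $2n-3>cn\iff n>3/(2-c)$.

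For part~(3) I invoke Lemma~\ref{bal_imply_carpi} in the contrapositive and aim to refute the $kn$-Balanced conjecture on $\mathrsfs{C}_n$. The natural witness is again $S=C_b=\{q_0,s_1,\ldots,s_k\}$: one still has $S.b^{-j}=S$ (as $b$ cyclically permutes $C_b$) and $S.a^{-1}=\{q_{n-k}\}$. Short preimages $S.v^{-1}$ are now governed by the interaction of the $a$-cycle of length $n-k+1$ with the $b$-cycle of length $k+1$ on $C_b$; in particular, each trajectory bringing some $s_j$ back into $C_b$ requires at least $n-k$ consecutive $a$-letters. I would classify the admissible short preimages and argue that the indicator vectors $[S.v_i^{-1}]$ produced by words of length below $kn$ cannot combine into the uniform vector $m\cdot\tfrac{|S|}{|Q|}[Q]$ once the budget $kn$ is smaller than the combined period $(n-k+1)(k+1)$, which simplifies to $n>k^2$. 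The principal difficulty, as in parts~(1)--(2), is the combinatorial confirmation that no clever collection of short words can balance the coverage of the states $s_j$ against those of the $q_i$.
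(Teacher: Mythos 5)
For parts (1) and (2) your proposal is essentially the paper's own proof: the same witness $S=C_b$, the same extremal word (your displayed chain of preimages is exactly $a^{n-2}ba^{n-2}$ read in preimage-processing order), the same value $2n-3$, and the same thresholds $n>3$ and $n>3/(2-c)$. The paper settles minimality by the forced-path argument built on Remark~\ref{rem_b}: in a shortest expanding word every letter must change the current set, and this determines the trajectory step by step. Your ``routine but case-heavy'' classification is that same argument left unexecuted; since the paper is no more detailed there, I do not count it as a gap, though a complete write-up must also handle the singleton stages and sets of the form $\{q_j,s_1\}$ and rule out detours that revisit earlier sets.

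Part (3) is where the genuine gap lies. You correctly plan to refute the $kn$-Balanced conjecture for $\mathrsfs{C}_n$ and then apply Lemma~\ref{bal_imply_carpi} in the contrapositive, but you never extract anything concrete from the balanced identity. The paper's proof has two working steps, and your sketch contains neither: (i) an averaging (pigeonhole) step over the states $q_0,\dots,q_{n-k}$, which uses $n>k^2$ to force a single word $v_j$ of the collection with $|S.v_j^{-1}\cap\{q_0,\dots,q_{n-k}\}|\ge k+1$, and (ii) a lower bound on the length of any word with that property, obtained by rerunning the forced-path analysis of part (1). Your substitute, that the vectors cannot sum to a multiple of $[Q]$ ``once the budget $kn$ is smaller than the combined period $(n-k+1)(k+1)$'', is not an argument: no mechanism connects this product to the balance identity, and the intuition behind it (that words much shorter than the combined period cannot produce well-spread preimages) fails concretely. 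Indeed, for $1\le j\le k$ one has $S.\bigl((ba^{n-k})^{j}\bigr)^{-1}=\{q_0,\dots,q_j\}\cup\{s_1,\dots,s_{k-1}\}$, so already $k(n-k+1)\le kn$ letters give $k+1$ states of the $q$-part, and for $k\ge2$ a preimage strictly larger than $|S|$ is reached after $2(n-k+1)<kn$ letters. So any correct completion must use the exact identity quantitatively (for example, that it forces equal coverage of every state and $\sum_i|S.v_i^{-1}|=m|S|$) together with a genuine classification of the sets reachable by words of length less than $kn$; your sketch does neither. Finally, the one concrete auxiliary claim you state for part (3), that bringing some $s_j$ back into $C_b$ requires at least $n-k$ consecutive letters $a$, is false as written, since $s_j.b\in C_b$. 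As it stands, part (3) of your proposal is a plan rather than a proof.
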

\begin{proof}

Consider the subset $S=C_b$. Let $v$ be a shortest word such that
$|S.v^{-1}|>|S|$, then it is easily proved by using
Remark~\ref{rem_b} that $v=a^m b a^m$ and the length of $v$ is equal
to $2m+1$. Indeed, since \makebox{$S=C_b$}, then by
Remark~\ref{rem_b} we have $v(1)=a$ and
$S_1=S.v(1)^{-1}=S.a^{-1}=\{q_m\}$. Further, since $C_b \cap S_1 =
\emptyset$, then $v(2)=a$. Applying these argumentations $m$ times,
we have
$$S_m=S.v(1 \ldots m)^{-1}=S.(a^{m})^{-1}=\{q_1,s_1,s_2 \ldots s_k\}.$$
Since $S_m.a^{-1}=\{q_0\} \subseteq S$, then $v(m+1)=b$ and
$S_m.b^{-1}=\{q_0,q_1\}$. If we repeat these arguments, we have that
$v=a^m b a^m$.

Thus $\mathrsfs{B}_n=\mathrsfs{A}(n-2,1)$ is the $n$-state automaton
and the shortest extension word for the subset $S$ is $v$ and its
length is $2m+1=2n-3$. Thus the first and the second items of the
proposition are proved.

We now consider the third one. It is clear that $\mathrsfs{C}_n$ is
a synchronizing $n$-state automaton. Arguing by contradiction,
suppose the $kn$-Balanced conjecture is true for $\mathrsfs{C}_n$
within the subset $S$. Then there exists a word collection $v_1,v_2
\ldots v_m$ such that $|v_i| \leq kn$ with the following property.
$$\sum_{i=1}^{m}{[S.v_i^{-1}]}=m\frac{|S|}{|Q|}[Q]$$
Since $|S|=k+1$, it is evident that there exists $j$ such that
$$|S.v_j^{-1} \bigcap \{q_0,q_1,q_2 \ldots q_m\}| \geq k+1.$$
Repeating the same argumentations as above for expanding $S$ in
$\{q_0,q_1,q_2 \ldots q_m\}$, we have $$|v_j|>|(a^m
b)^{k}a^m|=k(m+1)+m=k(n-k+1)+(n-k)=(k+1)n-k^2$$

Since $n>k^2$, then $|v_j|> (k+1)n-k^2 > k n$ and we get the
contradiction with the assumption. Hence, the $kn$-Balanced
conjecture is false for the $\mathrsfs{C}_n$ series. It completes
the proof of the proposition.

\end{proof}

\section{Conclusions}
\label{sec_concl}

In the previous section in the Proposition~\ref{main_prop} we
disproved the Extension and the $cn$-Extension conjecture for $c<2$.
Moreover, we disproved the $kn$-Balanced conjecture and the
conjecture of Alessandro and Carpi, therefore. However, it does not
mean that the Extension method can not be applied to prove the
\v{C}ern\'y conjecture or quadratic upper bound for the general
case. Our results show that we can not use these ways directly only
when $C_e=Q$. For instance, this method can be used to prove the
upper bound $2n^2-7n+7$ for automata with a connecting letter
(one-cluster automaton). Furthermore, we can generalize the
$kn$-Extension conjecture to the $kn$-Local-Extension conjecture as
follows.

\begin{conj}
\label{local_ext_k_conj} There are subsets $C_s,C_e$ and words
$v_s,v_e$ such that
$$|C_s.v_s|=1,C_s \subseteq C_e,|v_s|\leq k+kn(|C_s|-2)$$
and $C_e.v_e^{-1}=Q,|v_e| \leq kn(n-|C_e|)$ with the following
property. Each proper subset $S$ of $C_e$ is $kn$-Extendable in
$C_e$, i.e. $|S.v^{-1}\bigcap C_e|>|S\bigcap C_e|$ and $|v|\leq kn$
for some word $v$.
\end{conj}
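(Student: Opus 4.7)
The plan is to exploit the freedom the conjecture grants in choosing $C_s,C_e,v_s,v_e$ in terms of the given automaton, since the global $kn$-Extension conjecture is false by Proposition~\ref{main_prop}. A natural first approach is to construct all four objects from a single carefully chosen word: pick a word $w$ that compresses $Q$ by a controlled amount, set $C_e:=Q.w$ and $v_e:=w$, so that $C_e.v_e^{-1}=Q$ holds automatically. The bound $|v_e|\leq kn(n-|C_e|)$ then demands that $w$ achieve average compression rate at least $1/(kn)$, which is available for any strongly connected synchronizing automaton by the standard rank-based compression argument underlying the cubic bound of Frankl--Pin.

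For $C_s$ and $v_s$, I would take $C_s\subseteq C_e$ of small size, ideally a two-element subset that can be collapsed by a short tail appended to $w$. The budget $|v_s|\leq k+kn(|C_s|-2)$ forces $|v_s|\leq k$ when $|C_s|=2$; if no such short collapsing word exists for any doubleton in $C_e$, one would enlarge $C_s$ one state at a time, paying at most $kn$ letters per added element by invoking the local extendability from Step~3 in reverse. This keeps the total length within $k+kn(n-2)$ and hence yields a quadratic upper bound on the minimal \sw, which is the whole point of the conjecture.

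The main obstacle is the local extendability claim itself: that every proper $S\subsetneq C_e$ admits a word $v$ with $|v|\leq kn$ and $|S.v^{-1}\cap C_e|>|S\cap C_e|$. The disproved global form shows that arbitrary automata contain configurations, such as the loop $q_0,q_1,\ldots,q_m$ together with the attached $s_i$ in $\mathrsfs{A}(m,k)$, for which the shortest extending word has length linear of slope~$2$. The only hope for the local version is that for a well-chosen $w$ the image $Q.w$ cannot contain such pathological substructures intact. I would attempt to prove this via a linear-algebra analysis of the subspace spanned by $\{[S.u^{-1}]:|u|\leq kn\}$ restricted to $C_e$, in the spirit of Kari's argument for Eulerian automata or the one-cluster analysis of~\cite{BEAL1}, where the restriction to $C_e$ should effectively shorten the orbits that defeated the global conjecture.

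I expect this third step to be the genuinely hard one, because it is precisely the point where the previously successful extension arguments break down in the general setting. A sensible contingency is to alternate between attempting the proof and attempting to strengthen the counterexample series $\mathrsfs{A}(m,k)$ so that every choice of $w$ of controlled length retains the obstruction inside $Q.w$; such a strengthening would refute the local conjecture and indicate that the Extension method, even in its most flexible form, cannot yield a quadratic bound unconditionally.
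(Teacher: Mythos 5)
You are attempting to prove a statement that the paper itself does not prove: Conjecture~\ref{local_ext_k_conj} is introduced as an \emph{open} generalization after the global Extension, $cn$-Extension and $kn$-Balanced conjectures are refuted by the series $\mathrsfs{A}(m,k)$ (Proposition~\ref{main_prop}); it is verified only for that particular series (with $C_s=\{q_0,q_1\}$, $v_s=ba$, $C_e=\{q_0,\dots,q_m\}$, $v_e=a$) and, in effect, for one-cluster automata via \cite{BEAL1}, and the paper closes by listing $n$-Local-Extension among the questions ``most interesting for proving or rejecting.'' For $k=1$ the statement implies the \v{C}ern\'y conjecture, so no short argument can be expected. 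Your proposal is therefore a programme rather than a proof, and its load-bearing step --- that every proper subset $S$ of $C_e=Q.w$ is $kn$-Extendable \emph{inside} $C_e$ --- is exactly the content of the conjecture, left unargued. The linear-algebra direction you indicate (the span of the vectors $[S.u^{-1}]$ restricted to $C_e$, in the spirit of \cite{Ka03,BEAL1}) is known to succeed only under strong structural hypotheses (Eulerian, circular, one-cluster); you give no reason why, for a general automaton, the image $Q.w$ cannot contain the same kind of obstruction that $\mathrsfs{A}(m,k)$ realizes, nor any mechanism by which restriction to $C_e$ ``shortens the orbits.''

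There are also concrete errors in the auxiliary steps. The claim that the budget $|v_e|\leq kn(n-|C_e|)$ is available ``by the standard rank-based compression argument underlying the cubic bound'' is false: the Frankl--Pin argument only guarantees one additional merged state per roughly $n^2$ letters, not per $kn$ letters. (The $v_e$-budget is in fact the easy part --- take $v_e$ a single compressing letter, or even $C_e=Q$, $v_e=\lambda$ --- but with $C_e=Q$ the local property degenerates into the refuted global $kn$-Extension conjecture, which is precisely the tension the local form is designed to encode.) Likewise, constructing $v_s$ ``by invoking the local extendability from Step~3 in reverse'' is circular and conflates the two directions: extendability bounds the growth of preimages, whereas collapsing $C_s$ within the budget $k+kn(|C_s|-2)$ requires a per-step \emph{compression} bound of order $kn$, which is again unknown in general. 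Your contingency plan (strengthening the series $\mathrsfs{A}(m,k)$ so that the obstruction survives inside every cheaply reachable image $Q.w$) is a sensible research direction, but as written the proposal neither proves nor refutes the statement, and there is no proof in the paper against which it could be matched.
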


If $kn$-Local-Extension is true for the automaton $\mathrsfs{A}$,
then it has a \sw\ of length at most
$$kn(n-|C_e|)+k+kn(|C_s|-2)+(|C_e|-|C_s|)kn=k(n-1)^2$$

Particulary, if $k=1$ the \v{C}ern\'y conjecture holds true for the
automaton $\mathrsfs{A}$. Note that the disproved $kn$-Balanced
conjecture also can be generalized to the $kn$-Local-Balanced
conjecture by the similar way for some subset $C_e$.

\begin{conj}
\label{local_bal_k_conj} Each proper subset $S$ of $C_e$ admits a
word collection $v_1,v_2 \ldots v_m$ such that $|v_i| < kn$ with the
following property.
$$\sum_{i=1}^{m}{[S.v_i^{-1}]}=m\frac{|S|}{|Q|}[Q],$$
where $[T]$ is a characteristic vector of the set $T$ in the linear
space $R^n$.
\end{conj}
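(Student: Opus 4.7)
The statement is a conjecture rather than a theorem, so I outline a plausible attack modelled on the proof of Lemma~\ref{bal_imply_carpi}, localized from $Q$ to $C_e$. The plan is to reduce the statement to the existence of a \emph{local independent collection} for $\mathrsfs{A}$ with target $C_e$: a family of words $w_1,\dots,w_n$ of length less than $kn$ such that for every $s\in Q$ and every $t\in C_e$ some $w_i$ satisfies $s.w_i=t$. This is strictly weaker than the Carpi--Alessandro independent-set demand because the target state is restricted to $C_e$, which leaves room to excise the pathological states exploited by the $\mathrsfs{C}_n$ counterexample.

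Given such a family, the balancing identity follows immediately. Fix any $t\in C_e$; the defining property says $\bigcup_{i=1}^n t.w_i^{-1}=Q$, hence $\sum_i [t.w_i^{-1}]=[Q]$. Summing over $t$ in a proper subset $S\subset C_e$ and using determinism, exactly as in Lemma~\ref{bal_imply_carpi}, gives
$$\sum_{i=1}^{n}[S.w_i^{-1}] \;=\; \sum_{q\in S}\sum_{i=1}^n [q.w_i^{-1}] \;=\; |S|[Q] \;=\; n\frac{|S|}{|Q|}[Q],$$
which is the required identity with $m=n$. To supply the remaining ingredients, one would choose $v_e$ by running the Expansion Algorithm $EA$ of Section~\ref{sec_method} to obtain $C_e.v_e^{-1}=Q$ with $|v_e|$ bounded by $(n-|C_e|)$ times the maximum rank-dropping word length; this matches $|v_e|\le kn(n-|C_e|)$ precisely when each rank drop costs at most $kn$ letters. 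The pair $(C_s,v_s)$ can then be selected inside $C_e$ in the same spirit.

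The hard part is constructing the local independent collection with length strictly less than $kn$. The series $\mathrsfs{C}_n$ of Proposition~\ref{main_prop} shows this is impossible when $C_e=Q$; shrinking $C_e$ to exclude the bad subset $C_b$ (by removing the states $s_1,\dots,s_k$) can plausibly bypass that obstruction, but one must simultaneously guarantee a short $v_e$, and the allowed budget $kn(n-|C_e|)$ shrinks linearly as $C_e$ grows. Balancing these two contributions leaves essentially no slack, since the total $kn(n-|C_e|)+k+kn(|C_s|-2)+(|C_e|-|C_s|)kn$ must stay at exactly $k(n-1)^2$. I expect this interplay to be the genuine obstacle, and any successful attack to require a structural theorem about rank-reduction sequences in strongly connected \sa\ that is at least as deep as a linear bound on shortest rank-dropping words — essentially the core difficulty behind the \v{C}ern\'y conjecture itself.
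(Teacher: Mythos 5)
There is no proof of this statement in the paper to compare against: Conjecture~\ref{local_bal_k_conj} is left open there (the only related positive content is Remark~\ref{rem_1cl}, which asserts the $c=1$ case for the specific family $\mathrsfs{A}(m,n-m-1)$ with a particular choice of $C_e$), and your text is likewise a programme rather than a proof --- you state yourself that the construction of the ``local independent collection'' is the unresolved core. So the proposal cannot be accepted as a proof of the statement; at best it is a reduction of one open problem to another.

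Moreover, the step you call ``immediate'' has a genuine gap. In Lemma~\ref{bal_imply_carpi} the equality $\sum_{i=1}^{n}[t.w_i^{-1}]=[Q]$ is not a consequence of the covering $\bigcup_i t.w_i^{-1}=Q$ alone: it works because there are exactly $n$ words and $n$ states, so for each fixed $s$ the map $i\mapsto s.w_i$ is forced to be a bijection onto $Q$, and hence the preimages $t.w_1^{-1},\dots,t.w_n^{-1}$ \emph{partition} $Q$. In your localized version the targets are restricted to $C_e$ with $|C_e|<n$ in general, so a state $s$ may satisfy $s.w_i=t$ for several indices $i$ (and may land outside $C_e$ for others); covering then only yields the componentwise inequality $\sum_i [t.w_i^{-1}]\geq [Q]$, not equality, and summing over $t\in S$ does not produce a scalar multiple of $[Q]$. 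To recover the balanced identity you would need an additional uniformity requirement --- for every $s\in Q$ and every $t\in C_e$ the number of indices $i$ with $s.w_i=t$ is the same constant --- which is a substantially stronger hypothesis than local transitivity and is exactly of the kind that the series $\mathrsfs{C}_n$ in Proposition~\ref{main_prop} shows can fail to come cheaply. As it stands, even granting the existence of your local independent collection, the claimed identity does not follow.
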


The $kn$-Local-Balanced conjecture implies the main property of \\
\makebox{$(k+1)n$-Local-Extension}, i.e. each proper subset $S$ of
$C_e$ can be extended in $C_e$ by using the word $v$ of length at
most $(k+1)n$. One can easily prove this fact using ideas from the
papers \cite{Rystsov1,BEAL1,CARPI1} again.
\par
It is easy to see that the $1*n$-Local-Extension conjecture is true
for the automaton $\mathrsfs{A}(m,k)$ with \\
$C_s=\{q_0,q_1\},\ v_s=ba$ and $C_e=\{q_0,q_1, \ldots ,q_m\},\
v_e=a$. Moreover, by using $EA$ with this input we get the \sw\ $a
(ba^m)^{m-1} ba=v$ and $|v|=m^2+2=(n-k-1)^2+2$. By using
Remark~\ref{rem_b}, one can easily prove that this word is a
shortest \sw\ for this automaton.
\par
    The following remark is also trivially proved.
\begin{remark}
\label{rem_1cl} The $cn$-Local-Balanced conjecture is true for the
automaton $\mathrsfs{A}(m,n-m-1)$ with the same input for $c=1$ and
this value is the minimal with this property.
\end{remark}

In order to prove the quadratic bound for the one-cluster automata,
Beal and Perrin in the paper \cite{BEAL1} actually proved
$2n$-Local-Extension conjecture, using the $1*n$-Local-Balanced
conjecture as an auxiliary statement. Remark~\ref{rem_1cl} shows
that this way, \emph{directly} applied for this subclass of
automata, gives the order $O(2n^2)$ for the upper bound. Hence, the
upper bound $2n^2-7n+7$ is the best polynomial upper bound for the
one-cluster automata, one can achieve follow these techniques,
because $p(n)=2n^2-7n+7$ is the least polynomial function with a
first coefficient $2$ such that
$$p(2)=1=(2-1)^2,p(3)=4=(3-1)^2,$$
i.e. $p(n)$ coincides with a lower bound for the one-cluster
automata, which is not circular (see examples in \cite{TR_TESTAS}).

Nevertheless, the basic results of this paper are rejections of the
conjectures, the author wants to emphasize that it also can be
considered from the ``positive`` viewpoint, because it directs us to
the probably correct way for the proof of the quadratic bounds for
the length of the shortest synchronizing words for some subclass of
automata. Finally, remark the conjectures of $2n$-Extension,
$n$-Local-Extension and $n$-Local-Balanced seem to be most
interesting for proving or rejecting.

\end{document}